\documentclass{article}

\usepackage[utf8]{inputenc}
\usepackage[T1]{fontenc}
\usepackage[preprint]{neurips_2026}
\usepackage{hyperref}
\usepackage{url}
\usepackage{booktabs}
\usepackage{amsfonts}
\usepackage{nicefrac}
\usepackage{microtype}
\usepackage{xcolor}
\usepackage{amsmath,amssymb,amsthm,mathtools}
\usepackage{tikz}
\usepackage{enumitem}
\usepackage{setspace}

\newtheorem{definition}{Definition}
\newtheorem{proposition}{Proposition}
\newtheorem{theorem}{Theorem}
\newtheorem{remark}{Remark}
\newtheorem{conjecture}{Conjecture}

\newcommand{\T}{\leq_T}
\newcommand{\strictT}{<_T}

\title{The Computational Boundary of Inference:\\Capability Internalization, Training, and the Turing Jump}
\author{Chien-Ping Lu}
\date{}

\hypersetup{
    pdftitle={The Computational Boundary of Inference: Capability Internalization, Training, and the Turing Jump},
    pdfauthor={Chien-Ping Lu},
    pdfsubject={A computability-theoretic analysis of inference boundaries, stabilized revision, and recursive self-improvement narratives},
    pdfkeywords={computability theory, recursion theory, oracle machines, Turing jump, limit computation, self-modification, self-correction}
}

\begin{document}

\setstretch{1.08}
\maketitle

\begin{abstract}
Claims about recursive self-improvement in AI often slide from repeated internal revision to the possibility of qualitatively stronger capability without clearly distinguishing the underlying computational regimes. This paper gives a formal separation result in classical computability theory that blocks that move under a precise modeling assumption. For an oracle $A$, let $\mathcal{C}(A)=\{B : B \T A\}$ be the corresponding computational layer. We prove that finite internal self-modification remains inside $\mathcal{C}(A)$, while stabilized revision is governed instead by the jump $A'$ via the relativized limit lemma. Together with a local closure versus escape theorem, this yields a clean formal separation between within-layer iteration and ascent to a stronger relative level. The point is not that stronger layers never arise, but that they are not explained by finite repetition inside one already settled layer. The resulting separation gives a computability-theoretic limit on a broad class of recursive-improvement narratives in which repeated internal updating is treated as sufficient for qualitative capability ascent.
\end{abstract}

\section{Introduction}

Iterative refinement plays a central role in current AI practice and in wider discussion of recursive self-improvement. Systems are repeatedly prompted, re-ranked, corrected, equipped with tools, or allowed to regenerate parts of their own procedures. In informal reasoning, these different forms of revision are often grouped together under the thought that enough internal updating may eventually produce a qualitatively stronger capability regime. The present paper asks for a theorem-level test of that step. Its contribution is a formal separation result: under the modeling assumptions introduced below, finite within-layer revision, stabilized revision, and successor-layer internalization are not the same kind of process.

The argument has three parts. First, one must identify the boundary of inference inside an already settled mechanized layer. Second, one must identify what it means for some task to require guidance not available within that layer. Third, one must ask how such guidance, if it is to become part of a later autonomous capability regime, is internalized into a new settled layer. In contemporary AI engineering, machine-learning training matters here not because it is learning as such, but because it is a central empirical process by which externally essential guidance can be transformed into inference-realizable capability in a stronger successor deployed system. The mathematical burden of the paper is therefore not to identify ``training'' with a jump, but to show why finite within-layer activity is not enough and why genuine internalization forces ascent.

We work in classical oracle computability \citep{Turing1939,Rogers1967,Soare2016}. The goal is not to model present-day AI systems, and especially modern machine-learning pipelines, literally as oracle machines. Rather, the oracle framework is a structural idealization that supplies a clean formal language for distinguishing computation inside a fixed effective layer from passage to a stronger relative level. The motivation for using it is that several recursive patterns visible in current ML practice, including iterative revision, eventual stabilization, and the internalization of previously external guidance into a later deployed system, bear a strong structural resemblance to classical distinctions between finite revision, limit computation, and stronger successor procedures. What is historically striking is that only after recent advances in inference-time ML did these patterns become visible enough in practice for such classical computability-theoretic distinctions to become theoretically informative for AI discourse. The paper is therefore timely in a specific sense: it arrives at a moment when inference-time ML has matured enough for those recursive structures to become visible and theoretically usable. That resemblance by itself is not the paper's conclusion. The methodological claim is conditional: if a class of ML behaviors or recursive-improvement narratives preserves the structural distinctions formalized here, then the separation results below apply to that class of claims. Once an oracle $A$ is fixed, the associated layer
\[
\mathcal{C}(A)=\{B : B \T A\}
\]
captures what is computable relative to that layer. The basic question is then: what kinds of revision remain internal to $\mathcal{C}(A)$, and what kinds correspond instead to movement toward a strictly stronger level?

The answer proved here is sharp. Finite internal self-modification does not leave the original degree. More generally, once a settled deployed system over $A$ is fixed, any inference-time procedure obtainable by finite composition of its available procedures remains inside $\mathcal{C}(A)$. By contrast, if revision is modeled as an $A$-computable approximation that stabilizes in the limit, then the stabilized output is computable in $A'$, and conversely every $A'$-computable set arises from such a convergent revision process. Thus finite iteration and stabilized revision belong to different recursion-theoretic levels.

This framing is narrower than a general philosophical account of AI, but it is not modest in formal ambition. The paper does not claim that practical training or inference systems literally compute halting sets, nor does it claim that jump-level access is an empirical description of current models. The role of oracle computation is structural: it supplies a classical language for distinguishing what remains inside a fixed mechanized layer from what requires a stronger relative level. Mere implementability on a Turing machine would not by itself justify the framework, since that condition is shared by an enormous range of digital processes that do not preserve the distinctions studied here. What matters instead is preservation of the relevant recursive structure. On that interpretation, the paper establishes a separation result about recursive revision, self-modification, external guidance, and capability ascent narratives.

This kind of formal separation matters because many contemporary arguments about AI progress rely, explicitly or implicitly, on an escalation from iteration to ascent. A system improves itself, improves the process by which it improves itself, and so on; from there it is tempting to infer qualitative amplification. The results below do not deny that stronger systems may be built. Rather, they show that one important route to such claims is too quick: finite internal revision inside one already settled layer is not yet ascent to a stronger one.

Two clarifications are important. First, the paper is not a claim about the practical infeasibility of self-improvement. A learning system may improve dramatically in practice while still remaining, in the formal sense used here, inside one effective layer. Second, the paper is not an argument that all meaningful progress must come from outside a system. The claim is sharper: once a task is externally essential relative to a settled layer, finite internal revision does not justify a transition to a strictly stronger relative degree. The formal question is therefore not whether systems can improve, but what kind of mathematical object improvement is.

For readers approaching the topic through AI self-correction and iterative refinement, the paper delivers a precise negative result about a broad family of recursive-improvement stories. If the process under discussion is fully internal to one fixed effective regime, then no amount of finite iteration produces a stronger regime merely by accumulation. If the process instead involves stabilized revision, then the correct classical object is no longer ordinary computation inside the layer but jump-level computation relative to it. If, within this framework, a later system is to make such previously external guidance part of its own settled capability, then the relevant structural event is capability internalization into a stronger successor layer. Figure~\ref{fig:inference-training-boundary} summarizes this three-part separation: inside-layer closure, jump-level stabilization, and cross-layer ascent.

The classical ingredients used here are not new. The study of limiting approximation, trial-and-error computation, eventual stabilization, and structured revision traces back to classical work by Shoenfield, Putnam, Gold, and Ershov \citep{Shoenfield1959,Putnam1965,Gold1967,Ershov1970}. The contribution is therefore not a new jump theorem, a new limit lemma, or a new theory of limiting computation. It is a new synthesis for recursive-improvement discourse in AI: finite within-layer revision, stabilized revision, and capability-internalizing ascent are not mathematically interchangeable, and claims about self-improvement become sharper once they are forced into that distinction.

This paper is complementary to, rather than in competition with, current work on self-correction and iterative refinement in modern AI. Recent systems and surveys study automated correction, self-feedback, and tool-mediated improvement for large language models and agents \citep{Schick2023Toolformer,Madaan2023SelfRefine,Pan2024CorrectionSurvey,Kamoi2024SelfCorrection}. Our aim is different. We do not propose a new refinement procedure or benchmark; instead, we identify the formal distinctions required if such procedures are discussed under the broader heading of recursive self-improvement. The contribution is therefore a structural limitation result with direct implications for how recursive-improvement claims must be framed: finite internal revision, convergent revision, and deeper nested revision are not mathematically interchangeable.

\paragraph{Reading guide for ML readers.}
This paragraph is interpretive: it helps ML readers map the formal terms to familiar practice, but it is not part of the formal theory. The formal terms in this paper are not intended as literal restatements of current ML systems or training pipelines. They are structural idealizations, but several admit useful operational correspondences for contemporary practice, and the resemblance is strong enough to motivate the model. What licenses application is not resemblance alone, but preservation of the relevant structure: when an ML narrative really does rely on the distinction between finite internal revision, stabilized revision, and successor-layer internalization, the formal separation developed here becomes directly relevant to that narrative. Throughout, ``AI'' names the broader problem setting, while ``machine learning'' names particular training and inference mechanisms when those need to be discussed concretely. A \emph{settled deployed system} may be read as a frozen deployed model or fixed agent stack. An \emph{inference-time procedure} may be read as prompting, chain-of-thought, reflection, reranking, verifier-guided correction, or fixed tool-use loops. \emph{Finite internal revision} corresponds to repeated inference-time updating without changing the deployed system itself. \emph{Stable revision} corresponds to iterative correction that converges to a settled output. An \emph{external guidance resource} may be read as human supervision, reward signals, search traces, retrieved structure, tool outputs, or environmental feedback. \emph{Capability internalization} corresponds to finetuning, distillation, RLHF-style updating, retraining, or other machine-learning processes that bake such guidance into the next deployed system. Large-scale pre-training is one important case of this general pattern, but not the only one. A \emph{successor system} is then the newly trained or reconfigured deployed model, and \emph{step-mechanization} is the successor-layer transition in which previously external guidance becomes built-in capability. These are structural correspondences, not literal identifications; in particular, the paper does not claim that present training procedures instantiate oracle jumps directly.

The main contributions are:
\begin{itemize}[leftmargin=1.5em]
\item a local closure versus escape theorem for mechanized layers;
\item a theorem that finite internal self-modification is degree-bounded;
\item a theorem that stabilized revision corresponds exactly to the jump via the relativized limit lemma;
\item a two-level extension showing that nested stabilized revision corresponds to $A''$;
\item a formal bridge from externally essential guidance to successor-layer capability internalization;
\item a precise formal sense in which repetition alone does not explain ascent.
\end{itemize}

\begin{figure}[t]
\centering
\begin{tikzpicture}[
    font=\small,
    every node/.style={align=center},
    layer/.style={draw, rounded corners=3pt, line width=0.8pt, fill=gray!6},
    proc/.style={draw, rounded corners=3pt, line width=0.8pt, fill=blue!7},
    bridge/.style={draw, rounded corners=3pt, line width=0.8pt, fill=green!8},
    bound/.style={draw, rounded corners=3pt, line width=0.8pt, fill=red!8},
    regime/.style={draw, rounded corners=8pt, line width=0.6pt, font=\scriptsize\bfseries, inner xsep=6pt, inner ysep=3pt},
    note/.style={font=\scriptsize},
    arrowlabel/.style={font=\scriptsize, fill=white, inner sep=2pt, text width=1.9cm},
    flow/.style={->, line width=0.9pt},
    soft/.style={->, dashed, line width=0.8pt}
]
\node[regime, fill=blue!10, draw=blue!45!black] at (-2.15,2.55) {Inside one settled layer};
\node[regime, fill=red!10, draw=red!50!black] at (0.55,4.45) {Stabilized revision};
\node[regime, fill=green!10, draw=green!45!black] at (6.45,2.55) {Cross-layer ascent};
\node[layer, minimum width=5.6cm, minimum height=4.0cm] (base) at (-0.85,0) {};
\node at (-0.85,1.5) {Settled layer over $A$};
\node[proc, minimum width=4.25cm, minimum height=0.9cm] (finite) at (-0.85,0.45)
    {finite internal revision\\remains in $\mathcal{C}(A)$};
\node[proc, minimum width=4.25cm, minimum height=0.9cm] (infer) at (-0.85,-0.95)
    {fixed inference-time loops\\(prompting, reflection, tool use)};
\draw[soft] (-2.8,1.85) -- (1.1,1.85);
\node[note, text width=1.5cm, fill=white, inner sep=1.5pt] at (2.2,1.85) {canonical\\boundary};

\node[bound, minimum width=2.45cm, minimum height=0.9cm] (jump) at (-0.85,3.55)
    {$A'$\\jump};
\draw[flow] (finite.north) -- node[left, arrowlabel, text width=1.0cm, xshift=-8pt, yshift=-7pt] {closure} (-1.8,1.85);
\draw[flow] (0.1,1.85) -- node[right, arrowlabel, text width=1.7cm, xshift=6pt, yshift=4pt] {stabilized\\revision} (jump.south);

\node[bridge, minimum width=2.7cm, minimum height=0.9cm] (guidance) at (6.45,0.95)
    {external guidance\\outside $\mathcal{C}(A)$};
\node[bridge, minimum width=2.85cm, minimum height=0.9cm] (training) at (6.45,-0.65)
    {training, pre-training, or\\finetuning internalize it};
\node[layer, minimum width=2.85cm, minimum height=0.9cm] (successor) at (6.45,-2.25)
    {successor system\\over stronger $B$};
\draw[flow] (guidance.south) -- (training.north);
\draw[flow] (training.south) -- node[right, arrowlabel, text width=2.05cm, xshift=8pt] {capability\\internalization} (successor.north);
\draw[soft] (jump.east) .. controls (1.25,3.45) and (4.15,2.55) .. (guidance.north west);
\node[note, text width=2.2cm, fill=white, inner sep=1.5pt] at (3.75,3.05) {temporary access to $A'$ may depend on};
\draw[flow] (base.east) -- node[above, arrowlabel, text width=2.25cm, yshift=5pt] {ascent requires a new deployed system} (successor.west);
\end{tikzpicture}
\caption{Three regimes that the paper separates. Inside one settled layer over $A$, finite internal revision and fixed inference-time procedures remain in $\mathcal{C}(A)$. Under stabilized revision, the relevant object is the jump $A'$. Cross-layer ascent occurs only when previously external guidance is internalized by training into a successor deployed system over a stronger oracle $B$.}
\label{fig:inference-training-boundary}
\end{figure}

\section{Preliminaries}

We use standard oracle computability and Turing reducibility \citep{Turing1939,Rogers1967,Soare2016}. Decision problems are represented by sets of natural numbers. To keep the provenance of claims explicit, we distinguish below between classical recursion-theoretic notions, terminology adopted here for the paper's exposition, and framework-specific definitions introduced later.

\begin{definition}[Classical notion: Turing reducibility]
For a set $B \subseteq \mathbb{N}$, its \emph{characteristic function} is the total map $\chi_B : \mathbb{N} \to \{0,1\}$ given by $\chi_B(n)=1$ if $n \in B$ and $\chi_B(n)=0$ otherwise. For sets $A,B \subseteq \mathbb{N}$, we write $B \T A$ if there exists an oracle Turing machine that computes $\chi_B$ when given oracle access to $A$. We write $A \strictT B$ if $A \T B$ and $B \not\T A$.
\end{definition}

\begin{definition}[Classical notion: Turing jump]
Fix a standard enumeration $(\Phi_e^A)_{e \in \mathbb{N}}$ of partial oracle-computable functions relative to $A$, where $\Phi_e^A(x)\downarrow$ means that the $e$th oracle Turing machine with oracle $A$ halts on input $x$. The \emph{Turing jump} $A'$ is the halting problem relative to $A$:
\[
A'=\{e \in \mathbb{N} : \Phi_e^A(e)\downarrow\}.
\]
\end{definition}

\begin{theorem}[Classical jump theorem]
\label{thm:jump}
For every set $A$, one has $A \strictT A'$.
\end{theorem}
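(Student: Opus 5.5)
The claim $A \strictT A'$ splits into two parts: $A \T A'$ and $A' \not\T A$. I will prove them separately, using only the standard enumeration $(\Phi_e^A)$ fixed in the definition of the jump, together with the s-m-n theorem relativized to $A$.

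For $A \T A'$, I will build, uniformly in $n$, an index $f(n)$ of an oracle machine that ignores its input. On oracle $X$, this machine queries whether $n \in X$, halts if the answer is yes, and diverges otherwise. By the s-m-n theorem, $f$ is a total computable function, and it requires no oracle. Then $n \in A$ if and only if $\Phi_{f(n)}^A(f(n))\downarrow$, which holds if and only if $f(n) \in A'$. So $A$ is many-one reducible to $A'$ via the computable function $f$, and in particular $A \T A'$.

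For $A' \not\T A$, I will use diagonalization. Suppose toward a contradiction that some $A$-oracle machine computes $\chi_{A'}$. Define a partial $A$-computable function $\psi$ as follows: on input $x$, compute $\chi_{A'}(x)$ using the assumed machine; if the value is $1$, diverge, and if it is $0$, halt. Since $\psi$ is partial $A$-computable, it equals $\Phi_d^A$ for some index $d$ in the standard enumeration. Evaluating at $d$ gives $\Phi_d^A(d)\downarrow$ if and only if $d \notin A'$, while by definition $\Phi_d^A(d)\downarrow$ if and only if $d \in A'$. This is a contradiction, so no such machine exists.

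The only delicate point is bookkeeping rather than a genuine obstacle. The enumeration $(\Phi_e^A)$ must be uniform in the oracle, so that every partial $A$-computable function appears as some $\Phi_d^A$. The index $f(n)$ in the first part must also be produced without consulting $A$, which is exactly what the oracle-independent (relativized) s-m-n theorem provides. With those standard facts in hand, both halves are routine, and together they give $A \strictT A'$.
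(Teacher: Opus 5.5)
Your proof is correct and is the standard textbook argument: the oracle-free index function $f$ from the relativized s-m-n theorem gives a many-one reduction of $A$ to $A'$ (hence $A \T A'$), and your diagonalization at an index $d$ for $\psi$ uses the paper's diagonal definition $A'=\{e : \Phi_e^A(e)\downarrow\}$ exactly. The paper gives no proof of its own, stating this only as classical background with citations to Turing, Rogers, and Soare, and those sources prove it the same way you do.
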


\begin{definition}[Terminology adopted here: computational layer]
The \emph{computational layer relative to $A$} is
\[
\mathcal{C}(A)=\{B : B \T A\}.
\]
\end{definition}

\begin{definition}[Classical notion: limit-computability]
A set $B$ is \emph{limit-computable relative to $A$} if there exists an $A$-computable function $g(x,s)$ such that for every $x$ the limit $\lim_{s \to \infty} g(x,s)$ exists and equals the characteristic function of $B$ at $x$.
\end{definition}

\begin{theorem}[Relativized limit lemma]
\label{thm:limitlemma}
For every oracle $A$ and every set $B$, one has $B \T A'$ if and only if $B$ is limit-computable relative to $A$.
\end{theorem}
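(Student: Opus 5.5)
We prove the two directions separately; the engine of both is the standard stage-bounded approximation. For an oracle $A$ and a finite stage $s$, write $\Phi_{e,s}^A(x)\downarrow$ when the computation halts within $s$ steps. This predicate is $A$-computable in $(e,x,s)$, since only finitely many oracle queries are made. Two facts about $A'$ will be used repeatedly. First, $A'$ is $\Sigma^0_1$ relative to $A$, via $e \in A' \iff \exists s\, \Phi_{e,s}^A(e)\downarrow$. Second, by the $s$-$m$-$n$ theorem every $A$-c.e. question of the form ``$\exists s\, R(x,s)$'' with $R \T A$ reduces to membership in $A'$.

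\textbf{($\Rightarrow$)} Suppose $B \T A'$ via an oracle machine $\Psi$, so $\chi_B(x)=\Psi^{A'}(x)$. Let $A'_s=\{e\le s : \Phi_{e,s}^A(e)\downarrow\}$. This finite set is $A$-computable uniformly in $s$, and for each $n$ we have $A'_s \cap [0,n] = A' \cap [0,n]$ for all sufficiently large $s$. Define $g(x,s)=\Psi^{A'_s}_s(x)$ if this computation halts within $s$ steps, and $g(x,s)=0$ otherwise. Then $g$ is $A$-computable.

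For convergence, fix $x$. The true computation $\Psi^{A'}(x)$ halts after some $t$ steps and queries only numbers below some $n$. Once $s \ge t$ and $A'_s$ agrees with $A'$ below $n$, the simulation $\Psi^{A'_s}_s(x)$ follows exactly the same computation path and outputs $\chi_B(x)$. Hence $\lim_s g(x,s)=\chi_B(x)$.

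\textbf{($\Leftarrow$)} Suppose $g \T A$ with $\lim_s g(x,s)=\chi_B(x)$. To compute $\chi_B(x)$ from $A'$, we search for a stage $s$ after which $g(x,\cdot)$ never changes. The predicate ``$\exists t>s\, g(x,t)\neq g(x,s)$'' is $\Sigma^0_1$ relative to $A$. By the $s$-$m$-$n$ theorem there is a computable $h$ such that $\Phi^A_{h(x,s)}(h(x,s))\downarrow$ exactly when this predicate holds. So the $A'$-oracle machine does the following: for $s=0,1,2,\dots$, ask whether $h(x,s)\in A'$. At the first $s$ where the answer is no, output $g(x,s)$, computing $g$ itself from $A$, which is available since $A \T A'$.

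Such an $s$ exists because the limit exists. The output is then correct because $g(x,t)=g(x,s)$ for all $t>s$, so $g(x,s)$ equals the limit.

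The main obstacle is the uniformity bookkeeping in ($\Rightarrow$): we must check that the finite computation $\Psi^{A'}(x)$ depends only on a finite initial segment of $A'$. This is the use principle. We must also check that $A'_s$ converges pointwise, and this holds because $A'_s$ only ever adds elements, each settling once. The ($\Leftarrow$) direction is routine once the reduction from relativized $\Sigma^0_1$ sets to $A'$ is granted.
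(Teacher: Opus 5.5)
Your proof is correct. It is the classical argument: monotone stage approximations $A'_s$ together with the use principle give ($\Rightarrow$), and for ($\Leftarrow$) you ask $A'$ whether the event ``$g(x,\cdot)$ changes after stage $s$'', which is $\Sigma^0_1$ relative to $A$, occurs, then evaluate $g$ using $A \T A'$. The paper gives no proof of this theorem, since it imports the lemma as a classical result with citations, so your argument supplies the standard proof it defers to.
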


\section{Modeling Iterative Revision}

This section makes explicit the modeling choices behind the later theorems. The aim is not to claim that every real AI system or machine-learning pipeline is best represented by these exact definitions. The aim is to isolate a mathematically clean version of a common informal contrast: repeated finite internal updating versus revision that stabilizes only in the limit. The definitions in this section are not classical recursion-theoretic primitives; they are paper-specific modeling definitions built on top of the classical notions fixed above.

\begin{definition}[Modeling definition introduced here]
Fix an oracle $A$. A \emph{finite internal revision process over $A$} consists of an $A$-computable total function $F$ on indices of $A$-oracle Turing machines together with a starting index $e_0$. The process generates a sequence
\[
e_{s+1}=F(e_s).
\]
\end{definition}

\begin{definition}[Modeling definition introduced here]
Fix an oracle $A$. A \emph{one-level stable revision process over $A$} is an $A$-computable approximation $g(x,s)$ such that for every $x$ the limit
\[
\lim_{s \to \infty} g(x,s)
\]
exists.
\end{definition}

\begin{definition}[Modeling definition introduced here]
Fix an oracle $A$. A \emph{two-level nested stable revision process over $A$} is an $A$-computable function $h(x,s,t)$ such that for every pair $(x,s)$ the inner limit
\[
g(x,s)=\lim_{t \to \infty} h(x,s,t)
\]
exists, and for every $x$ the outer limit
\[
\lim_{s \to \infty} g(x,s)
\]
also exists.
\end{definition}

\begin{remark}
These definitions abstract away from implementation details such as prompting, search, planning, tool use, or parameter updates. They only retain the structural distinction relevant here: whether revision is finitely iterated inside one settled layer, or whether it is represented by a convergent approximation whose limit may occupy a stronger relative degree.
\end{remark}

\begin{remark}
This distinction is useful because many informal recursive-improvement arguments do not distinguish between ``repeat the internal update again'' and ``form a limit object from a revisable approximation.'' In recursion-theoretic terms, those are different constructions.
\end{remark}

\section{Local Closure and Escape}

The first result records the basic shape of a fixed layer: it is closed under internal oracle computation, yet it has a canonical relative boundary.

\begin{theorem}[Paper result: local closure versus escape]
\label{thm:closureescape}
For every oracle $A$, the class $\mathcal{C}(A)$ is locally closed under oracle computation, but there exists a canonically associated set, namely $A'$, that is not solvable within that same layer.
\end{theorem}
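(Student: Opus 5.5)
The plan is to split the statement into its two halves and settle each with standard facts. The first half is closure. I will make ``locally closed under oracle computation'' precise as: if $B_1,\dots,B_k \in \mathcal{C}(A)$ and $C$ is computable by an oracle machine with oracle access to $B_1,\dots,B_k$ (equivalently, to the join $B_1\oplus\cdots\oplus B_k$), then $C \in \mathcal{C}(A)$.

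For closure, I first show that $\mathcal{C}(A)$ is closed under finite joins. Given machines $\Phi_{i}^A$ computing $\chi_{B_i}$, a single $A$-oracle machine can decode a query about $B_1\oplus\cdots\oplus B_k$ and run the appropriate $\Phi_i^A$. I then invoke transitivity of $\T$ by composing machines. Let $\Psi$ compute $\chi_C$ from the oracle $D = B_1\oplus\cdots\oplus B_k$, and let $\Theta^A$ compute $\chi_D$. The new machine simulates $\Psi$ and answers each oracle query $q$ by running $\Theta^A(q)$. This simulation always halts because $\Theta^A$ is total, so the composite is an $A$-oracle machine computing $\chi_C$. The only care needed is that the simulation is uniform: each run of $\Psi$ makes finitely many queries, each answered in finite time. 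As a special case, $A \in \mathcal{C}(A)$ via the identity query machine.

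For escape, I take $A'$ as the canonically associated set. It is canonical because it is defined uniformly from $A$ and the fixed enumeration $(\Phi_e^A)$. By Theorem~\ref{thm:jump}, $A' \not\T A$, so $A' \notin \mathcal{C}(A)$. To keep the proof self-contained I would also sketch the diagonal argument. Suppose $\chi_{A'} = \Phi_d^A$ for some $d$. Build an index $e$ such that $\Phi_e^A(x)$ halts if and only if $\Phi_d^A(x) = 0$. Then $e \in A'$ if and only if $e \notin A'$, a contradiction. For this I need the $s$-$m$-$n$/padding facts that make the diagonal index exist relative to $A$.

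The only real obstacle is definitional rather than mathematical. The statement's phrase ``locally closed'' is informal, so the proof has to fix the precise closure property as above and check that it is exactly transitivity plus closure under joins. I also want to note that $A' \in \mathcal{C}(A')$ and $A \T A'$, so the escaping set lies in the next layer. This last observation is optional, but it links the result to the jump-level picture used later.
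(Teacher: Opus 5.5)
Your proposal is correct and follows the paper's proof: closure is transitivity of $\T$ via oracle-machine composition, and escape is $A' \not\T A$ from Theorem~\ref{thm:jump}, hence $A' \notin \mathcal{C}(A)$. Your extras go slightly beyond the paper, which uses only single-oracle transitivity and simply cites the jump theorem. These extras are closure under finite joins and the diagonal sketch, where no $s$-$m$-$n$ is actually needed since $e$ is just an index of an explicitly described oracle machine.
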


\begin{proof}
If $B \T A$ and $C \T B$, then oracle-machine composition gives $C \T A$, so $\mathcal{C}(A)$ is locally closed. For escape, the jump theorem gives $A' \not\T A$. Since $\mathcal{C}(A)=\{B : B \T A\}$, it follows that $A' \notin \mathcal{C}(A)$.
\end{proof}

\begin{remark}
This theorem already separates two ideas that are often conflated in informal discussions of iterative improvement: internal enlargement of what can be done with a fixed layer, and transition to a stronger relative level. The former is compatible with local closure; the latter requires crossing a boundary.
\end{remark}

\begin{remark}
One can read $\mathcal{C}(A)$ as the total capability class available once a fixed computational regime has been settled. The theorem then says that more activity inside that regime may still be nontrivial, but it is still activity inside the regime. The existence of a canonical relative boundary means that local closure does not amount to unbounded ascent.
\end{remark}

\section{Finite Internal Revision Is Degree-Bounded}

We now isolate the simplest formal model of self-modification inside a fixed layer: an $A$-computable update rule over indices of $A$-oracle machines.

\begin{theorem}[Paper result: internal self-modification is degree-bounded]
\label{thm:degreebounded}
Fix an oracle $A$. Let $F$ be an $A$-computable total function on indices of $A$-oracle Turing machines, and let $(e_s)_{s \in \mathbb{N}}$ be the sequence defined by
\[
e_{s+1}=F(e_s).
\]
Then for every finite stage $s$, the set decided by the machine with index $e_s$ is in $\mathcal{C}(A)$. In particular, no such finite stage computes $A'$.
\end{theorem}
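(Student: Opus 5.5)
The plan is to notice that the conclusion does not really depend on the dynamics of $F$. Every $e_s$ is just some index of an $A$-oracle Turing machine, and any set decided by an $A$-oracle machine is, by the definition of $\T$, in $\mathcal{C}(A)$.

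\textbf{Step 1: each stage is a genuine index.} I would first check by induction on $s$ that each $e_s$ is a well-defined index. The base case is $e_0$, which is given. For the inductive step, $F$ is total, so $F(e_s)$ is defined and is again an index of an $A$-oracle machine. I would also record the stronger uniform fact that $s \mapsto e_s$ is an $A$-computable total function. On input $s$, iterate $F$ starting from $e_0$ exactly $s$ times, querying $A$ whenever $F$ does. This uniformity is not needed for the pointwise claim, but it makes the intended point visible: the entire trajectory lives inside the layer.

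\textbf{Step 2: the decided set lies in $\mathcal{C}(A)$.} Fix $s$ and suppose $\Phi_{e_s}^A$ decides a set $B_s$, meaning it is total and $\{0,1\}$-valued with $\Phi_{e_s}^A = \chi_{B_s}$. Then $\Phi_{e_s}$ itself is the oracle Turing machine witnessing $B_s \T A$, so $B_s \in \mathcal{C}(A)$. If one prefers to phrase this through composition, the local closure half of Theorem~\ref{thm:closureescape} gives the same conclusion.

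\textbf{Step 3: no stage computes $A'$.} For the final sentence, suppose some stage decided $A'$. Step~2 would then put $A'$ in $\mathcal{C}(A)$. This contradicts the escape half of Theorem~\ref{thm:closureescape}, which rests on the jump theorem (Theorem~\ref{thm:jump}).

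\textbf{Main obstacle.} The only delicate point is interpretive rather than technical. The phrase ``the set decided by the machine with index $e_s$'' presupposes that $\Phi_{e_s}^A$ is total and $\{0,1\}$-valued, and an arbitrary $A$-computable $F$ need not preserve totality. I would handle this by reading the theorem as a statement about those stages that do decide a set, and noting that the claim holds vacuously at any other stage.

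Alternatively, one could strengthen the conclusion to cover partial machines. The domain of $\Phi_{e_s}^A$ is then $A$-c.e. and hence $\T A'$, but it need not be $\T A$. So for partial stages the correct bound is one jump up, and the theorem as stated should be restricted to stages that decide a set. With that reading fixed, the proof is a short induction followed by an appeal to Theorem~\ref{thm:closureescape}.
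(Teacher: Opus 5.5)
Your proposal is correct and follows essentially the paper's argument: each $e_s$ is an index of an $A$-oracle machine, so any set it decides is $\T A$ by the definition of oracle computation, and the final clause follows from $A' \notin \mathcal{C}(A)$ via Theorem~\ref{thm:closureescape}. Your caveat that ``the set decided by'' tacitly presupposes $\Phi_{e_s}^A$ is total and $\{0,1\}$-valued is a legitimate refinement that the paper leaves implicit, and it is genuinely needed because a partial stage can have domain exactly $A'$; the induction and uniformity remarks in your Step~1 are harmless but unnecessary.
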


\begin{proof}
For each finite stage $s$, the index $e_s$ is an index of an $A$-oracle Turing machine. Let $B_s$ denote the set decided by that machine. By definition of oracle computation, $B_s \T A$, so $B_s \in \mathcal{C}(A)$. Since $A' \notin \mathcal{C}(A)$ by Theorem~\ref{thm:closureescape}, no finite stage computes $A'$.
\end{proof}

\begin{proposition}[Framework consequence: finite internal iteration does not realize ascent]
\label{prop:noascent}
Under the hypotheses of the previous theorem, no finite stage of the internally self-modifying process yields a set $C$ with $A \strictT C$. In particular, no finite internally generated stage yields any intermediate strengthening strictly above $A$.
\end{proposition}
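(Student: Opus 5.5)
The plan is to reduce the proposition directly to Theorem~\ref{thm:degreebounded} together with the definition of strict reducibility, with no new construction. Fix a finite stage $s$ and let $C$ be the set decided by the machine with index $e_s$; this is the set $B_s$ of the previous proof. I will argue by contradiction: suppose $A \strictT C$. By the definition of $\strictT$, this means $A \T C$ and $C \not\T A$.

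The key step is to note that Theorem~\ref{thm:degreebounded} already gives $C \in \mathcal{C}(A)$, that is, $C \T A$. This contradicts the second clause $C \not\T A$ of the strictness assumption, so no finite stage yields a $C$ with $A \strictT C$. The ``in particular'' clause about intermediate strengthenings is then immediate. Any set $D$ with $A \strictT D$, whether or not $D \strictT A'$, is excluded by the same argument, because every finite stage is pinned inside $\mathcal{C}(A)$. Combined with the trivial case in which $A \T C$ does hold, the degree of each stage is either exactly the degree of $A$ or strictly below it.

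I do not expect a genuine obstacle, since the argument is a one-line consequence of the previous theorem. The only point needing care is the one already implicit in Theorem~\ref{thm:degreebounded}: ``the set decided by the machine with index $e_s$'' presupposes that $\Phi_{e_s}^A$ is total and $\{0,1\}$-valued. I will adopt the same convention as that theorem, namely that the proposition concerns those stages whose index actually decides a set, or equivalently that $F$ maps total $A$-deciders to total $A$-deciders. Partial stages decide no set and so yield no $C$ at all. It is also worth noting, though not needed for the proof, that the $A$-computability of $F$ plays no role here. Even an arbitrary sequence of $A$-oracle indices stays inside $\mathcal{C}(A)$, since membership in the layer is a property of each individual index, not of how the sequence was generated.
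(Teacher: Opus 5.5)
Your proof is correct and matches the paper's argument: both take the set $B_s$ decided at stage $s$, invoke Theorem~\ref{thm:degreebounded} to get $B_s \T A$, and note that this contradicts the clause $B_s \not\T A$ in the definition of $A \strictT B_s$. Your side remarks are accurate but not needed for the argument: the totality convention for $\Phi_{e_s}^A$, which the paper leaves implicit, and the observation that the $A$-computability of $F$ plays no role.
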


\begin{proof}
Let $B_s$ be the set decided at finite stage $s$. By the previous theorem, $B_s \T A$. If one had $A \strictT B_s$, then by definition $B_s \not\T A$, a contradiction. Hence no finite stage yields a set strictly above $A$.
\end{proof}

\begin{remark}
This proposition gives a precise recursion-theoretic sense in which iteration is not yet ascent. Whatever practical sophistication a sequence of internal updates may exhibit, the finite stages remain degree-bounded so long as the entire update rule is internal to the original layer.
\end{remark}

\section{Stabilized Revision Corresponds to the Jump}

Finite internal updates stay within the original degree. The situation changes once one passes to convergent revision.

\begin{theorem}[Paper bridge theorem: stabilized revision corresponds to the jump]
\label{thm:jumpbridge}
If a one-level stable revision process over a layer $A$ is modeled as an $A$-computable process of repeated revision that converges pointwise, then every stabilized outcome is computable in $A'$. Conversely, every $A'$-computable set can be represented by such a convergent revision process.
\end{theorem}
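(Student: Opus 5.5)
The plan is to reduce Theorem~\ref{thm:jumpbridge} directly to the relativized limit lemma (Theorem~\ref{thm:limitlemma}). The one step that needs care is that a one-level stable revision process, as defined, need not converge to a $\{0,1\}$-valued function. A characteristic function is required before the limit lemma applies.

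\emph{Forward direction.} Let $g(x,s)$ be a one-level stable revision process over $A$, and let $f(x)=\lim_s g(x,s)$ be its stabilized outcome. By a \emph{stabilized outcome} I will mean a set $B$ of the form $B=\{x : f(x)=1\}$, or, if the intended reading is a $\{0,1\}$-valued process, simply the set with $\chi_B=f$.

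To handle both readings, define
\[
\tilde g(x,s)=\begin{cases}1 & \text{if } g(x,s)=1,\\ 0 & \text{otherwise.}\end{cases}
\]
Then $\tilde g$ is $A$-computable and total, since $g$ is. It also converges pointwise. The reason is that $g(x,s)$ is eventually constant in $s$, so $\tilde g(x,s)$ is eventually constant as well, with limit $\chi_B(x)$.

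Hence $B$ is limit-computable relative to $A$, and the "if" direction of Theorem~\ref{thm:limitlemma} gives $B \T A'$. If one wants the whole limit function $f$ and not just a set, the same idea applies. The graph of $f$, coded as $\{\langle x,y\rangle : f(x)=y\}$, is limit-computable via $\langle x,y\rangle \mapsto [g(x,s)=y]$, so it is $\T A'$. Then $f$ is $A'$-computable, because for each $x$ one can search $y$ until the graph membership test succeeds, and a value $f(x)$ always exists.

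\emph{Converse.} Let $B \T A'$. The "only if" direction of Theorem~\ref{thm:limitlemma} supplies an $A$-computable $g(x,s)$ with $\lim_s g(x,s)=\chi_B(x)$ for every $x$. Since every limit exists, $g$ is by definition a one-level stable revision process over $A$, and its stabilized outcome is $B$. So every $A'$-computable set is represented by such a process.

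\emph{Main obstacle.} The only real issue is matching the paper's modeling definition to the classical one. The definition of a one-level stable revision process does not require values in $\{0,1\}$, and "stabilized outcome" needs a fixed meaning. The truncation $\tilde g$ (or the graph-coding argument) handles this. Once that is done, both directions are immediate instances of the relativized limit lemma, and no new recursion-theoretic construction is needed.
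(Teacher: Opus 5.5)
Your proposal is correct and follows the paper's proof: both directions are direct applications of the relativized limit lemma (Theorem~\ref{thm:limitlemma}), and your converse is the same as the paper's. The one difference is that the paper simply assumes the limit of $g$ is $\chi_B$, whereas your truncation $\tilde g$ (and the graph-coding remark for a general limit function $f$) explicitly covers processes that are not $\{0,1\}$-valued; this is a valid extra precaution that the paper's argument does not need under its reading.
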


\begin{proof}
Let $B$ be the stabilized outcome. By assumption there is an $A$-computable approximation $g(x,s)$ such that $\lim_{s \to \infty} g(x,s)$ exists for every $x$ and equals the characteristic function of $B$. By definition, $B$ is limit-computable relative to $A$. The relativized limit lemma therefore yields $B \T A'$.

Conversely, if $B \T A'$, then the same lemma implies that $B$ is limit-computable relative to $A$. Hence there exists an $A$-computable approximation $g(x,s)$ converging pointwise to the characteristic function of $B$. This is exactly the required convergent revision process.
\end{proof}

\begin{proposition}[Immediate framework consequence: stabilization can raise the degree]
\label{prop:stabilization}
Fix an oracle $A$. Suppose an internally revising $A$-oracle system produces an $A$-computable approximation $g(x,s)$ whose pointwise limit exists for every $x$. Then the stabilized output
\[
B(x)=\lim_{s \to \infty} g(x,s)
\]
is computable in $A'$.
\end{proposition}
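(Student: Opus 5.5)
The plan is to reduce Proposition~\ref{prop:stabilization} to the forward direction of Theorem~\ref{thm:jumpbridge}, that is, to the relativized limit lemma (Theorem~\ref{thm:limitlemma}). The one point needing care is that $B$ is defined as a set by its limit values, so I must first check that those values really form a characteristic function. If the approximation $g$ is $\{0,1\}$-valued, this is immediate: a limit of $\{0,1\}$-valued values is eventually constant, so it lies in $\{0,1\}$. Then $B=\{x : \lim_s g(x,s)=1\}$ is a well-defined subset of $\mathbb{N}$ with $\chi_B(x)=\lim_s g(x,s)$ for every $x$.

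If $g$ may take arbitrary natural-number values, I will normalize it. Define $\tilde g(x,s)=1$ when $g(x,s)=1$ and $\tilde g(x,s)=0$ otherwise. This $\tilde g$ is still $A$-computable, since it is obtained by composing $g$ with a computable map. Pointwise convergence is also preserved, because $g(x,\cdot)$ is eventually constant, so $\tilde g(x,\cdot)$ is eventually constant too. Finally, $\lim_s \tilde g(x,s)=\chi_B(x)$, where $B$ is read as $\{x : \lim_s g(x,s)=1\}$. This is the only step that is not pure bookkeeping, and it is the one I expect to be the main (mild) obstacle. The rest is matching definitions.

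With that in place, $\tilde g$ (or $g$ itself) witnesses that $B$ is limit-computable relative to $A$, in the sense of the limit-computability definition. Applying the ``if'' direction of Theorem~\ref{thm:limitlemma} then yields $B \T A'$, which is exactly the claim. Equivalently, the hypothesis says precisely that $g$ is a one-level stable revision process over $A$, and the first half of Theorem~\ref{thm:jumpbridge} gives the conclusion directly.

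I would close by noting what the proposition does and does not assert: it gives only the upper bound $B \T A'$. Combined with the converse half of Theorem~\ref{thm:jumpbridge} (taking $B=A'$ itself) and Theorem~\ref{thm:jump}, it shows that stabilization \emph{can} raise the degree strictly above $A$. That explains the proposition's title without needing a separate argument.
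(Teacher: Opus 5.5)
Your proposal is correct and follows the paper's route: the paper's proof is a one-line appeal to the relativized limit lemma (Theorem~\ref{thm:limitlemma}), which is exactly your main step. Your normalization to a $\{0,1\}$-valued approximation is extra care that the paper leaves implicit, since its limit-computability definition already takes the limit to be a characteristic function.
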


\begin{proof}
This is an immediate application of the relativized limit lemma.
\end{proof}

\begin{remark}
Theorem~\ref{thm:jumpbridge} is the central bridge theorem of the paper. It identifies stabilized revision with jump-level computation, not with ordinary computation inside the same layer. This is exactly the point at which repeated correction, if represented as a convergent approximation, becomes a different kind of object from finite internal iteration.
\end{remark}

\section{Two-Level Nested Revision and the Second Jump}

The one-level bridge already separates finite internal revision from stabilized revision. A natural next question is whether deeper revision depth tracks higher jump depth. The following theorem gives the first nontrivial next rung.

\begin{theorem}[Paper extension theorem: two-level nested stable revision corresponds to the second jump]
\label{thm:secondjump}
For every oracle $A$ and every set $B$, the following are equivalent:
\begin{itemize}[leftmargin=1.5em]
\item $B$ is the stabilized outcome of a two-level nested stable revision process over $A$;
\item $B \T A''$.
\end{itemize}
\end{theorem}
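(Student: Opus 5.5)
The plan is to apply the relativized limit lemma (Theorem~\ref{thm:limitlemma}) twice: first relative to $A$ for the inner limit, then relative to $A'$ for the outer limit. We use the standard fact that $(A')'=A''$. One subtlety is that the outer approximation $g(x,s)$ is a function of pairs and not the characteristic function of a set. We therefore code it as the set $G=\{\langle x,s\rangle : g(x,s)=1\}$, or, if values other than $0,1$ may occur, as the graph $\{\langle x,s,v\rangle : g(x,s)=v\}$. This lets us apply the limit lemma, which is stated for sets. Without loss of generality we may also assume that $h$ and $g$ are $\{0,1\}$-valued: the final limit is a characteristic function, so post-composing with a computable rounding (map every value $\neq 1$ to $0$) preserves both the convergence and the limit.

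\emph{Forward direction.} Let $h$ be $A$-computable with inner limits $g(x,s)=\lim_t h(x,s,t)$ and outer limit $\chi_B(x)=\lim_s g(x,s)$. Viewing $(x,s)$ as a single input $\langle x,s\rangle$, the set $G$ is limit-computable relative to $A$, because its characteristic function is $\lim_t h(\cdot,\cdot,t)$. The forward half of Theorem~\ref{thm:limitlemma} then gives $G \T A'$, so $g$ is $A'$-computable. Since $\chi_B(x)=\lim_s g(x,s)$, the set $B$ is limit-computable relative to $A'$. The limit lemma relativized to the oracle $A'$ now yields $B \T (A')' = A''$.

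\emph{Converse.} Suppose $B \T A''=(A')'$. The limit lemma relative to $A'$ supplies an $A'$-computable $g(x,s)$ with $\lim_s g(x,s)=\chi_B(x)$ for all $x$; we may take $g$ to be $\{0,1\}$-valued. The set $G$ coding $g$ satisfies $G\T A'$. By the converse half of Theorem~\ref{thm:limitlemma} relative to $A$, there is an $A$-computable $h(\langle x,s\rangle,t)$ whose limit in $t$ is $\chi_G(\langle x,s\rangle)=g(x,s)$. Uncurrying the pairing gives an $A$-computable $h(x,s,t)$ whose inner limits all exist and equal $g(x,s)$, and whose outer limits exist and equal $\chi_B(x)$. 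So $B$ is the stabilized outcome of a two-level nested stable revision process over $A$.

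The main obstacle is bookkeeping rather than any new idea. We must check that Theorem~\ref{thm:limitlemma} holds uniformly for every oracle, in particular for the oracle $A'$; this is legitimate because the theorem is stated for every oracle. We must also check that passing between the function $g$ and the set $G$, and between $h(x,s,t)$ and $h(\langle x,s\rangle,t)$, preserves computability relative to the appropriate oracle. Both follow from the computable bijection $\langle\cdot,\cdot\rangle$ and from closure of $\mathcal{C}(A)$ and $\mathcal{C}(A')$ under composition, as in Theorem~\ref{thm:closureescape}.
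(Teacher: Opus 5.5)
Your proposal is correct and follows the paper's proof essentially step for step. You apply the relativized limit lemma relative to $A$ for the inner limit, using the coded set $G=\{\langle x,s\rangle : g(x,s)=1\}$, and then relative to $A'$ for the outer limit; the converse reverses the two applications and uncurries $h_0(\langle x,s\rangle,t)$. Your explicit normalization to $\{0,1\}$-valued approximations is a small but real gain in rigor, since the paper's converse tacitly needs $g=\chi_G$ when it asserts $g(x,s)=\lim_t h(x,s,t)$.
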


\begin{proof}
Suppose first that $B$ is the stabilized outcome of a two-level nested stable revision process $h(x,s,t)$ over $A$. For each pair $(x,s)$, the inner limit
\[
g(x,s)=\lim_{t \to \infty} h(x,s,t)
\]
exists by assumption. Treating the pair $\langle x,s\rangle$ as a single coded input, the inner approximation defines one limit-computable object relative to $A$ rather than a disconnected family of separate processes. Since $h$ is $A$-computable, the function $g$ is limit-computable relative to $A$ uniformly in $(x,s)$. By Theorem~\ref{thm:limitlemma}, $g$ is therefore computable in $A'$.

Now for each $x$, the outer limit
\[
B(x)=\lim_{s \to \infty} g(x,s)
\]
exists. Hence $B$ is limit-computable relative to $A'$. Applying Theorem~\ref{thm:limitlemma} again gives
\[
B \T (A')' = A''.
\]

Conversely, suppose that $B \T A''$. Applying Theorem~\ref{thm:limitlemma} relative to $A'$ gives an $A'$-computable approximation $g(x,s)$ such that
\[
B(x)=\lim_{s \to \infty} g(x,s)
\]
for every $x$. Let
\[
G=\{\langle x,s\rangle : g(x,s)=1\}.
\]
Since $g$ is $A'$-computable, the set $G$ is computable in $A'$. Applying Theorem~\ref{thm:limitlemma} again, now relative to $A$, yields an $A$-computable approximation $h_0(u,t)$ such that
\[
\chi_G(u)=\lim_{t \to \infty} h_0(u,t)
\]
for every $u$. Define
\[
h(x,s,t)=h_0(\langle x,s\rangle,t).
\]
Then for every $x$ and $s$,
\[
g(x,s)=\lim_{t \to \infty} h(x,s,t),
\]
and, by construction,
\[
B(x)=\lim_{s \to \infty} g(x,s).
\]
Thus $h$ is a two-level nested stable revision process over $A$ whose stabilized outcome is exactly $B$.
\end{proof}

\begin{remark}
Theorem~\ref{thm:secondjump} shows that revision depth tracks jump depth at least through the second jump. For present purposes, its role is not merely technical. It shows that the one-level bridge is not an isolated trick: once revision is genuinely nested, the corresponding relative degree rises again.
\end{remark}

\begin{conjecture}[Nested stable revision and iterated jump correspondence]
For each $n \geq 3$, the class of stabilized outcomes of $n$-level nested stable revision over a base oracle $A$ coincides, up to Turing equivalence, with computation relative to the $n$th jump $A^{(n)}$.
\end{conjecture}

\section{Implications for Recursive Self-Improvement}

The theorems above do not directly analyze gradient descent, language-model training, or agent architectures. Their relevance to AI is structural rather than implementational. They separate three ideas that are often allowed to blur together:
\begin{itemize}[leftmargin=1.5em]
\item repeated internal updates inside one fixed effective regime;
\item convergent revision that produces a stabilized limit object;
\item genuine ascent to a stronger relative computational level.
\end{itemize}

This separation matters for at least three recurring patterns in AI discourse.

First, arguments about self-improving agents often treat repeated internal redesign as though it automatically compounds into a stronger regime. Theorem~\ref{thm:degreebounded} and Proposition~\ref{prop:noascent} show that this is too quick in the finite within-layer case.

Second, arguments about self-correction often speak as though ``more correction'' is simply ``more of the same'' computation. Theorem~\ref{thm:jumpbridge} shows that once revision is represented by convergent approximation, the correct classical object is different: jump-level computation rather than ordinary computation within the original layer.

Third, arguments about open-ended recursive improvement often lack a formal criterion for when successive revision stages count as genuinely deeper rather than merely longer. Theorem~\ref{thm:secondjump} supplies such a criterion at the first nontrivial next level: deeper nesting of stabilized revision corresponds to higher jump depth.

\subsection*{Fixed Systems, External Guidance, and Internalization}

To sharpen the distinction between within-layer revision and cross-layer ascent, it is useful to introduce one further structural vocabulary. The point is not to formalize every engineering detail of deployed AI systems. It is to isolate the minimal condition under which the present recursion-theoretic framework forces a distinction between inference-time activity and the formation of a stronger settled layer.

\begin{definition}[Introduced here]
A \emph{settled deployed system over $A$} is a fixed effective family of procedures whose realizable decision tasks all lie in the computational layer $\mathcal{C}(A)$.
\end{definition}

\begin{definition}[Introduced here]
Let $S_A$ be a settled deployed system over $A$. An \emph{inference-time procedure} of $S_A$ is any task-realizing procedure obtained by finite composition of procedures already available in $S_A$, with no change to the family itself. A decision task $D$ is \emph{inference-realizable in $S_A$} if some inference-time procedure of $S_A$ realizes $D$.
\end{definition}

\begin{proposition}[Framework consequence: inference-time closure]
Let $S_A$ be a settled deployed system over $A$. If a decision task $D$ is inference-realizable in $S_A$, then $D \in \mathcal{C}(A)$.
\end{proposition}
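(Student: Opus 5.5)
The plan is to argue by induction on the structure of an inference-time procedure, using the local closure part of Theorem~\ref{thm:closureescape} as the engine. By definition, an inference-time procedure of $S_A$ is obtained by finite composition of procedures already available in $S_A$, with no change to the family. So I will first fix what "finite composition" means formally. A procedure $P$ realizing $D$ is built in finitely many steps. Each step either takes a basic procedure of $S_A$, or forms a procedure that computes its output by an effective (oracle-machine) combination of finitely many previously built procedures, used as subroutines.

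The base case is immediate from the definition of a settled deployed system. Every basic procedure of $S_A$ realizes a decision task in $\mathcal{C}(A)$, so each such task $B_i$ satisfies $B_i \T A$.

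For the inductive step, suppose $P$ is built from procedures $P_1,\dots,P_k$ realizing $B_1,\dots,B_k \in \mathcal{C}(A)$. I will treat $P$ as an oracle Turing machine querying the join $B_1\oplus\cdots\oplus B_k$. Since a finite join of sets each reducible to $A$ is reducible to $A$ (dovetail the $k$ reductions according to the parity/residue of the query), the join lies in $\mathcal{C}(A)$. The task realized by $P$ is then reducible to that join, and local closure (transitivity of $\T$, as in Theorem~\ref{thm:closureescape}) gives that it is in $\mathcal{C}(A)$.

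Finitely many applications of this step yield $D \in \mathcal{C}(A)$. Since the composition is finite, no limit or stabilization enters, in contrast with Theorem~\ref{thm:jumpbridge}.

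The main obstacle is not computational but definitional. "Finite composition" is informal in the paper, and the argument only goes through if the combining operations are themselves effective relative to $A$: uniform oracle-machine wiring, not arbitrary external selection of which procedure to run. I would make this explicit as the reading of the definition, noting that "no change to the family itself" excludes importing non-$A$-computable glue. Under that reading, the induction is routine and mirrors the proof of Theorem~\ref{thm:degreebounded}.
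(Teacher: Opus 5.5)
Your proposal is correct and follows the paper's route: the paper notes that each basic procedure of $S_A$ realizes a task in $\mathcal{C}(A)$ and then invokes the local closure of $\mathcal{C}(A)$ from Theorem~\ref{thm:closureescape}. Your explicit induction, the finite-join step (needed because that theorem's transitivity is stated for a single oracle), and your requirement that the composition glue itself be $A$-effective are details the paper leaves implicit; they make its one-line argument rigorous without changing it.
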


\begin{proof}
By definition, each procedure available in $S_A$ realizes a task in $\mathcal{C}(A)$. An inference-time procedure is obtained from these by finite composition alone. Theorem~\ref{thm:closureescape} therefore implies that the resulting task still lies in $\mathcal{C}(A)$.
\end{proof}

\begin{definition}[Introduced here]
Let $S_A$ be a settled deployed system over $A$, and let $D$ be a decision task. An \emph{external guidance resource for $D$ relative to $S_A$} is a set $E$ such that
\[
D \T A \oplus E.
\]
It is \emph{externally essential} if moreover $D \not\T A$.
\end{definition}

\begin{definition}[Introduced here]
Let $S_A$ be a settled deployed system over $A$, and let $D$ be a task with an externally essential guidance resource relative to $S_A$. A settled deployed system $S_B$ over $B$ is a \emph{successor system} of $S_A$ if $A \T B$. We say that $S_B$ \emph{capability-internalizes} $D$ relative to $S_A$ if $D$ is inference-realizable in $S_B$.
\end{definition}

\begin{proposition}[Framework consequence: capability internalization forces ascent]
Let $S_A$ be a settled deployed system over $A$, and let $S_B$ be a successor system over $B$. If $S_B$ capability-internalizes a task $D$ that is externally essential relative to $S_A$, then
\[
A \strictT B.
\]
\end{proposition}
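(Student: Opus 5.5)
The plan is to unpack the definitions and reduce the claim to the two halves of strict reducibility. By definition, $A \strictT B$ means $A \T B$ together with $B \not\T A$. The first half comes for free: a successor system $S_B$ of $S_A$ is defined by the requirement $A \T B$. All the work therefore goes into showing $B \not\T A$.

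First, I will locate $D$ relative to $B$. Since $S_B$ capability-internalizes $D$, the task $D$ is inference-realizable in $S_B$. The inference-time closure proposition, applied to the settled deployed system $S_B$ over $B$, then gives $D \in \mathcal{C}(B)$, that is, $D \T B$.

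Second, I will argue by contradiction. Suppose $B \T A$. From $D \T B$ and $B \T A$, the transitivity of Turing reducibility gives $D \T A$; this is the local closure half of Theorem~\ref{thm:closureescape}, obtained by composing oracle machines. But $D$ is externally essential relative to $S_A$, and by definition that includes $D \not\T A$. This contradiction yields $B \not\T A$, and combined with $A \T B$ we get $A \strictT B$.

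I expect no genuine obstacle here; the argument is essentially definitional. The one point needing care is the second step, where I must use exactly the right clause of the definitions. Only the condition $D \not\T A$ from external essentiality is used. The guidance resource $E$ with $D \T A \oplus E$ plays no role in the proof, and I will note this explicitly. I will also check that the inference-time closure proposition is being applied to $S_B$ over $B$ rather than to $S_A$, since that is what places $D$ in $\mathcal{C}(B)$.
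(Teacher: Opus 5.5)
Your proposal is correct and follows the paper's proof essentially step for step. You get $A \T B$ from the successor definition and $D \T B$ from inference-time closure applied to $S_B$; then assuming $B \T A$ yields $D \T A$ by transitivity, contradicting external essentiality, so $A \strictT B$.
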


\begin{proof}
Since $S_B$ is a successor system of $S_A$, one has $A \T B$. Since $S_B$ capability-internalizes $D$, the inference-time closure proposition gives $D \T B$. If also $B \T A$, then transitivity would yield $D \T A$, contradicting external essentiality. Hence $A \strictT B$.
\end{proof}

\begin{proposition}[Framework consequence: limit-guided internalization reaches beyond the base layer]
Let $S_A$ be a settled deployed system over $A$, and let $D$ be the stabilized outcome of a one-level stable revision process over $A$. Suppose that $D \not\T A$, and that a successor system $S_B$ over $B$ capability-internalizes $D$ relative to $S_A$. Then
\[
A \strictT B
\]
and
\[
D \T A'.
\]
Consequently, the task internalized by $S_B$ is computable in $A'$ and so falls within the jump-level regime associated with layer $A$; if in addition $B \strictT A'$, then the transition from $A$ to $B$ witnesses a strictly stronger step above that regime.
\end{proposition}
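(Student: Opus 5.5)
The plan is to split the statement into its two displayed conclusions and derive each from a result already in the paper. Then I will interpret the final ``consequently'' clause as a direct combination of the two, not as a new claim needing separate work.

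\textbf{First conclusion, $A \strictT B$.} I would reduce this to the proposition that capability internalization forces ascent. That proposition needs $D$ to be externally essential relative to $S_A$. The hypothesis $D \not\T A$ supplies the second clause of that definition. For the first clause, I need a set $E$ with $D \T A \oplus E$; the trivial choice $E = D$ works, since $D \T A \oplus D$ via the projection onto odd coordinates. With external essentiality in hand, $S_B$ is a successor system (so $A \T B$) that capability-internalizes $D$, and the earlier proposition gives $A \strictT B$ directly. Written out, it is the same argument as before: the inference-time closure proposition gives $D \T B$, and if $B \T A$ held, transitivity would give $D \T A$, contradicting the hypothesis.

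\textbf{Second conclusion, $D \T A'$.} This comes straight from Proposition~\ref{prop:stabilization}, or equivalently the forward direction of Theorem~\ref{thm:jumpbridge}. By hypothesis, $D$ is the stabilized outcome of an $A$-computable $g(x,s)$ whose pointwise limit exists and equals $\chi_D$. That is exactly limit-computability relative to $A$, so Theorem~\ref{thm:limitlemma} yields $D \T A'$. I need to be careful here that ``stabilized outcome'' means the limit takes values in $\{0,1\}$ and is the characteristic function of $D$, as in the paper's proof of Theorem~\ref{thm:jumpbridge}. If $g$ may take other values, I would first compose it with the $A$-computable map sending $1$ to $1$ and everything else to $0$; this does not affect convergence.

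\textbf{The ``consequently'' clause.} Its first part just restates $D \T A'$, meaning $D \in \mathcal{C}(A')$. Its second part is conditional and definitional: under the extra assumption $B \strictT A'$, we already have $A \strictT B$ from the first conclusion, so the chain $A \strictT B \strictT A'$ holds. I would present this as the sense in which the step from $A$ to $B$ is strictly stronger than $A$ while staying strictly below the jump, and I would avoid claiming anything more, such as that $D \T B$ places $D$ anywhere specific relative to $A'$ beyond what was already shown.

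\textbf{Main obstacle.} There is little genuine difficulty. The only delicate point is checking that external essentiality really follows from the bare hypothesis $D \not\T A$, which the $E = D$ witness handles. The remaining care is interpretive: reading the final clause conservatively, since ``witnesses a strictly stronger step above that regime'' is informal and should be cashed out simply as $A \strictT B$.
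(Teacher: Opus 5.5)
Your proposal is correct and follows the paper's proof. The paper gets $A \strictT B$ from the proposition that capability internalization forces ascent, and $D \T A'$ from the forward direction of Theorem~\ref{thm:jumpbridge}, exactly as you do; your explicit witness $E = D$ for external essentiality, which is what lets that earlier proposition apply under the bare hypothesis $D \not\T A$, fills in a step the paper leaves implicit.
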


\begin{proof}
The strictness $A \strictT B$ is given by the previous proposition. Since $D$ is the stabilized outcome of a one-level stable revision process over $A$, Theorem~\ref{thm:jumpbridge} yields $D \T A'$.
\end{proof}

\subsection*{Relation to Training and Capability Internalization}

These definitions make one part of the intended AI interpretation structurally precise rather than merely analogical. Inference-time prompting, reranking, reflection, verifier-guided correction, and fixed tool-calling procedures all count as within-layer activity so long as they are realized by inference-time procedures of a settled deployed system. By the inference-time closure proposition, such activity remains inside the original computational layer. If it is organized as convergent approximation, then it belongs to the stable-revision picture over that same base layer; but it still does not by itself constitute cross-layer ascent.

The theory does not, however, identify training uniquely with ascent. In principle, any mechanism that turns externally essential guidance into inference-realizable capability in a successor settled system would count as capability internalization. Training or retraining is simply the most familiar empirical example: supervision, evaluation signals, search traces, or tool-use policies that were previously outside the deployed system can become built into the parameters or architecture of a later one. Large-scale pre-training is one especially important instance, since it is often the stage at which broad regularities and externally accumulated structure are first internalized into a deployed model; but the formal notion is intentionally broader and also includes later finetuning, distillation, post-training, and retraining phases. The formal object singled out by the theory is therefore not ``training'' as such, but capability internalization under successor-layer ascent.

\begin{remark}
This remains a structural interpretation, not a literal identification of current AI systems or machine-learning pipelines with classical oracle degrees. The paper does not claim that gradient descent computes a Turing jump, nor that any specific model-training workflow realizes a designated degree. Its claim is narrower: once one fixes a deployed effective regime, within-regime activity is degree-bounded, whereas successful internalization of previously external and genuinely indispensable guidance forces ascent to a stronger successor layer.
\end{remark}

\begin{remark}
None of this implies that practical AI systems literally instantiate these jump levels. The point is rather that if one wants a mathematically serious language for claims about recursive improvement, then one must distinguish finite internal iteration from stabilized and nested revision. Without that distinction, qualitative ascent is being asserted under a mathematically ambiguous notion of ``more updating.''
\end{remark}

\section{Discussion}

This paper establishes a strict formal boundary on within-layer recursive improvement. Its central message can be stated compactly:
\begin{itemize}[leftmargin=1.5em]
\item inference-time activity inside a settled deployed system stays within the original computational layer;
\item stabilized revision can reach the jump boundary of that layer;
\item capability internalization of externally essential guidance forces ascent to a stronger successor layer;
\item therefore repeated local revision does not by itself explain ascent to a stronger layer.
\end{itemize}

For the broader discussion around AI improvement, the consequences are immediate. Learning, adaptation, self-correction, and ascent cannot be treated as interchangeable notions. A system may repeatedly alter prompts, plans, intermediate programs, or internal procedures while remaining inside one fixed effective layer. The theorems above therefore block a common inference in recursive-improvement narratives: that sufficiently rich internal iteration alone should count as qualitative ascent. It does not.

The framework also shows exactly where stronger capability regimes enter. First comes the boundary of inference: fixed prompting, reflection, verifier-guided correction, and fixed tool use remain bounded by the settled layer in which they operate. Second comes the externally essential gap: if some task lies outside that layer, then temporary success depends on guidance not already mechanized there. Third comes settlement of a new layer: if that guidance is no longer to remain external, it must be internalized into a stronger successor system. In contemporary AI engineering, machine-learning training matters not because it is learning as such, but because it is the primary empirical process by which externally essential guidance is transformed into inference-realizable capability in a stronger successor deployed system.

This last point is interpretive, but it is still constrained by the formal separation established above. The paper does not claim that gradient descent computes a Turing jump, nor that every training pipeline realizes a designated degree. It does identify the structural role that training often plays in practice. Training is not merely more inference-time computation. It is the physical process by which supervision, search traces, reward signals, or tool-use policies that were previously external to one deployed system may become built into the parameters or architecture of a later one. The force of the argument therefore does not come from the bare fact that ML systems are digitally implementable. It comes from the stronger claim that many recursive-improvement narratives in ML preserve enough of the relevant structure for the formal separation to be informative. In that sense, training is theoretically important not because it is learning in the abstract, but because it is a central empirical mechanism by which capability internalization is realized, even though the formal object isolated by the theory is successor-layer internalization itself rather than training in the abstract.

The paper does not yet provide a full realization theory for intermediate strengthenings $B$ with
\[
A \strictT B \strictT A'.
\]
That is the natural next step. The definitions above isolate necessary structural conditions on within-layer inference and on successor-layer internalization, but they do not yet classify all concrete mechanisms that may realize intermediate degrees in practice. A fuller theory would need to say which forms of self-modification, search, or tool-mediated revision are best modeled as merely internal, which genuinely internalize externally essential guidance, and which kinds of architectural or environmental change should count as altering the underlying layer itself rather than merely enriching computation within it. That fuller theory would also have to explain, at an engineering level rather than only a structural one, when a training process truly settles a new effective baseline rather than merely extending the use of external aids.

What the paper has already established is enough to constrain the field. Finite internal iteration, stabilized revision, and successor-layer ascent are distinct mathematical regimes. Any serious theory of open-ended AI amplification must therefore say which regime it is invoking and how transition between them is supposed to occur. If it does not, it leaves open exactly the structural boundary isolated here.

\bibliographystyle{plainnat}
\bibliography{references}

\end{document}